\newcommand{\cmark}{\ding{51}}%
\newcommand{\xmark}{\ding{55}}%
\newenvironment{tightitemize}%
{
  \begin{list}{$\bullet$}{%
      \setlength{\leftmargin}{10pt}
    \setlength{\itemsep}{0pt}%
      \setlength{\parsep}{0pt}%
      \setlength{\topsep}{0pt}%
      \setlength{\parskip}{0pt}%
      }%
    }%
    {
  \end{list}
  }
  \newcounter{tecounter}
  {
    \begin{list}{\arabic{tecounter}.}{%
        \usecounter{tecounter}
        \setlength{\leftmargin}{10pt}
      \setlength{\itemsep}{0pt}%
        \setlength{\parsep}{0pt}%
        \setlength{\topsep}{0pt}%
        \setlength{\parskip}{0pt}%
        }%
      }%
      {
    \end{list}
    }%
    \DeclareMathOperator*{\argmin}{arg\,min}
    \crefname{equation}{}{}
    \Crefname{equation}{Equation}{Equations}
    \crefname{alg}{}{}
    \Crefname{alg}{Algorithm}{Algorithms}
\begin{document}

    \title{\bf{Snail}: Secure \bf{S}i\bf{n}g\bf{l}e \bf{I}teration \bf{L}ocalization}
    \newcommand{\naivealgo}{ Localization}
    \newcommand{\algo}{Single Iteration Localization}
    \newcommand{\shortalgo}{SIL}
    \newcommand{\snail}{Turbo the Snail}
    \newcommand{\snailshort}{Turbo}

    \author{James Choncholas}
    \affiliation{
      \institution{Georgia Institute of Technology}
      \city{Atlanta}
      \country{USA}
      }
      \email{jchoncholas3@gatech.edu}

\author{Pujith Kachana}
\affiliation{
	\institution{Georgia Institute of Technology}
	\city{Atlanta}
	\country{USA}
}
\email{pkachana3@gatech.edu}

\author{Andr\'e Mateus}
\affiliation{
	\institution{Ericsson Research}
	\city{Kista}
	\country{Sweden}
}
\email{andre.mateus@ericsson.com}

\author{Gregoire Phillips}
\affiliation{
	\institution{Ericsson Research}
	\city{Sunnyvale}
	\country{USA}
}
\email{greg.phillips@ericsson.com}

\author{Ada Gavrilovska}
\affiliation{
	\institution{Georgia Institute of Technology}
	\city{Atlanta}
	\country{USA}
}
\email{ada@cc.gatech.edu}

\begin{abstract}
	Localization is a computer vision task by which the position and orientation of
a camera is determined from an image and environmental map.
We propose a method for performing localization in a privacy preserving manner
supporting two scenarios: first, when the image and map are held by a client
who wishes to offload localization to untrusted third parties, and second, when
the image and map are held separately by untrusting parties.
Privacy preserving localization is necessary when the image and map are
confidential, and offloading conserves on-device power and frees resources for
other tasks.
To accomplish this we integrate existing localization methods and secure
multi-party computation (MPC), specifically garbled circuits, yielding
proof-based security guarantees in contrast to existing obfuscation-based
approaches which recent related work has shown vulnerable.
We present two approaches to localization, a baseline data-oblivious adaptation
of localization suitable for garbled circuits and our novel \algo{}.
Our technique improves overall performance while maintaining confidentiality of
the input image, map, and output pose at the expense of increased communication
rounds but reduced computation and communication required per round.
\algo{} is over two orders of magnitude faster than a straightforward
application of garbled circuits to localization enabling real-world usage in
\snail{}, the first robot to offload localization without revealing input
images, environmental map, position, or orientation to offload servers.

\end{abstract}

\keywords{Multi-Party Computation, Localization, Pose Estimation}

\maketitle

\section{Introduction}
\label{sec:intro}

Visual localization algorithms allow devices to infer their position in a
three-dimensional map from features derived from images.
Popular applications include autonomous vehicles, virtual reality, and robotics
where a device with a camera is moving through an environment and localizing
repeatedly on camera frames to infer its position and orientation over
time~\cite{lu_l3-net_2019, di_rgb-d_2016, auat_cheein_slam_2010}.

Incorporating privacy into visual localization is an active area of research.
Consider a lightweight robot equipped with a camera.
Privacy preserving localization allows the robot to offload the localization
task to more powerful resources without revealing its location, its map, or
images taken by its camera.
Offloading conserves onboard power and frees resources for other tasks.
Privacy is important in this setting because the robot may inadvertently take
pictures of sensitive information, or people who do not want their picture
taken.
Ethical or legal requirements may prevent the images from being shared outside
the device domain.
Privacy preserving localization also enables settings when the map and camera
image are held by different parties, or when both are held by the same party
who is trying to prevent data exfiltration by requiring an attacker to
compromise multiple systems simultaneously.
While technologies have emerged to make privacy preserving localization more
efficient, none make formal security statements and many of the claimed
security properties have been broken~\cite{speciale2019, dusmanu2021,
  brachmann_visual_2021, liu_range-based_2019}.

Previous attempts at secure localization rely on techniques based on
obfuscation, like line cloud transformations~\cite{speciale2019} and
adversarial affine subspace embeddings~\cite{dusmanu2021}.
Such obfuscation-based techniques do not make formal security statements, but
they have low overhead and run efficiently.
Consequently, these techniques have been shown to be vulnerable to attacks
which can recover the input image~\cite{chelani2021privacy,
  pittaluga2023ldpfeat}.
Recently, differential privacy has been used to address one such attack however
privacy is inversely related to the amount of data processed
sequentially~\cite{pittaluga2023ldpfeat}.
While this is appropriate for some applications, it is not appropriate for
repeated invocation, common when localizing on sequential camera frames.

As recent innovation demonstrates, applying privacy-preserving techniques to
visual localization that prioritize efficiency over formal security guarantees
leaves weaknesses in the protection these techniques afford.
At the same time, general purpose secure computation can present performance
obstacles with high computation overhead and communication costs.
Localization is not well suited to execution under homomorphic encryption due
to the algorithm's high multiplicative depth, heavy use of division, and
requirement for floating point data representation.
Garbled circuits are a better fit for localization, however they are
communication intensive.
What is needed is a co-design approach that meets privacy expectations with
practical efficiency.

In this work, we investigate privacy preserving visual localization using
secure multi-party computation (MPC).
We first develop a data-oblivious implementation in which we run a standard
visual localization algorithm under MPC based on non-linear optimization.
This provides the first known implementation of visual localization with formal
security guarantees drawn from general purpose secure computation.
While secure, this na\"ive approach has inefficiencies which stem from the
iterative nature of localization algorithms.
The popular localization approach we consider (minimizing reprojection error
via non-linear least squares) is composed of two nested iterative steps which
are unfriendly to data-oblivious execution, one being optimization via gradient
descent and the other being singular value decomposition (SVD).
To address the performance issues these iterative steps present we develop a
novel \algo{} approach, leveraging the fact that localization is typically run
repeatedly on sequential camera frames.
We modify the outer iterative algorithm, gradient descent, to run each
iteration independently in a way which maintains security for a series of
localization runs.
Then, we address the inner iterative algorithm, the SVD, by finding the optimal
number of iterations a~priori, which in practice does not depend on secret
input data.
The resulting {\em secure, non-linear and iterative localization}~--~{\bf
SNaIL}~--~runs two orders of magnitude faster than a na\"ive adaptation of
localization to MPC.

In summary, this work makes the following contributions:
\begin{tightitemize}
  \item We design a novel \algo{} (\shortalgo{}) method for visual localization
  which increases round complexity in exchange for orders of magnitude runtime
  improvement and better privacy properties.
  \item We present a simulation-based definition of security for
  privacy preserving localization\footnote{Source code available at
    \url{https://github.com/secret-snail/localization-server}}.
  \item We experimentally evaluate \shortalgo{} against a
  data-oblivious baseline using two different visual localization algorithms
  and two different MPC frameworks, EMP~\cite{emp-toolkit} and
  ABY~\cite{demmler2015aby}.
  \item We demonstrate real-world practicality with \snail{}, the first robot
  to offload localization without revealing the view of its camera or its
  position and orientation in the environment.
\end{tightitemize}

\section{Preliminaries}
\label{sec:background}
This work builds on the goals of previous works seeking to integrate
privacy-enhancing features into localization.
Our specific focus is the following two privacy goals: 1) preventing image and
map reconstruction and 2) maintaining confidentiality of the pose.
These two goals represent the most illustrative standards with which to
contrast our methods to the existing state-of-the-art~\cite{speciale2019,
  chelani2021privacy, dusmanu2021}.

\subsection{Secure Computation}
\label{sec:sec_comp}
Secure computation describes a cryptographic field that seeks to allow multiple
parties to compute a function over secret inputs.
Secure multiparty computation (MPC) protocols serve a similar purpose as
trusted execution environments (TEEs) but do not require hardware support and
are not vulnerable to a class of side channel attacks that affect
TEEs~\cite{van2017telling, lee2017hacking, lee2017inferring}.
When applied in practice, MPC is often collaborative in the sense in that each
party has secret inputs to a function, for example a bid in an auction, and the
function output, e.g. the winning bid, is learned by all participants.
MPC, however, may also be used to offload computation from a weak device to
stronger resources such that the weak device with the secret inputs plays a
minor role in the protocol.

Often MPC protocols are constructed with either Shamir or additive secret
sharing protocols~\cite{shamir1979share}, else garbled circuit-based
protocols~\cite{yao1982protocols,yao1986generate} briefly introduced next.
Secret share-based approaches rely on splitting data into pieces in a way that
retains homomorphic properties such that each participant operates on a
ciphertext indistinguishable from randomness.
Garbled circuit-based approaches instead structure computation as a boolean
circuit where each wire value, zero or one, is represented by an encryption
key.
The circuit is evaluated gate by gate using input wire keys to decrypt output
wires which are themselves input keys to the next level of gates.
In the two party case, one party plays the role of the generator who creates
the garbled circuit and the other party plays the role of the evaluator,
decrypting the wires received from the generator.
The encryption keys for the first set of wires in the circuit corresponding to
plaintext inputs owned by the evaluator are sent using an Oblivious Transfer
protocol~\cite{rabin2005exchange}.
This allows the evaluator to learn the appropriate wire label and ensures that
generator does not learn which label the evaluator is requesting.

\begin{figure}
  \vspace{1cm} \centering \def\svgwidth{.4\textwidth}
  \begingroup%
  \makeatletter%
  \providecommand\color[2][]{%
    \errmessage{(Inkscape) Color is used for the text in Inkscape, but the package 'color.sty' is not loaded}%
    \renewcommand\color[2][]{}%
  }%
  \providecommand\transparent[1]{%
    \errmessage{(Inkscape) Transparency is used (non-zero) for the text in Inkscape, but the package 'transparent.sty' is not loaded}%
    \renewcommand\transparent[1]{}%
  }%
  \providecommand\rotatebox[2]{#2}%
  \newcommand*\fsize{\dimexpr\f@size pt\relax}%
  \newcommand*\lineheight[1]{\fontsize{\fsize}{#1\fsize}\selectfont}%
  \ifx\svgwidth\undefined%
    \setlength{\unitlength}{1125bp}%
    \ifx\svgscale\undefined%
      \relax%
    \else%
      \setlength{\unitlength}{\unitlength * \real{\svgscale}}%
    \fi%
  \else%
    \setlength{\unitlength}{\svgwidth}%
  \fi%
  \global\let\svgwidth\undefined%
  \global\let\svgscale\undefined%
  \makeatother%
  \begin{picture}(1,0.66666667)%
    \lineheight{1}%
    \setlength\tabcolsep{0pt}%
    \put(0,0){\includegraphics[width=\unitlength]{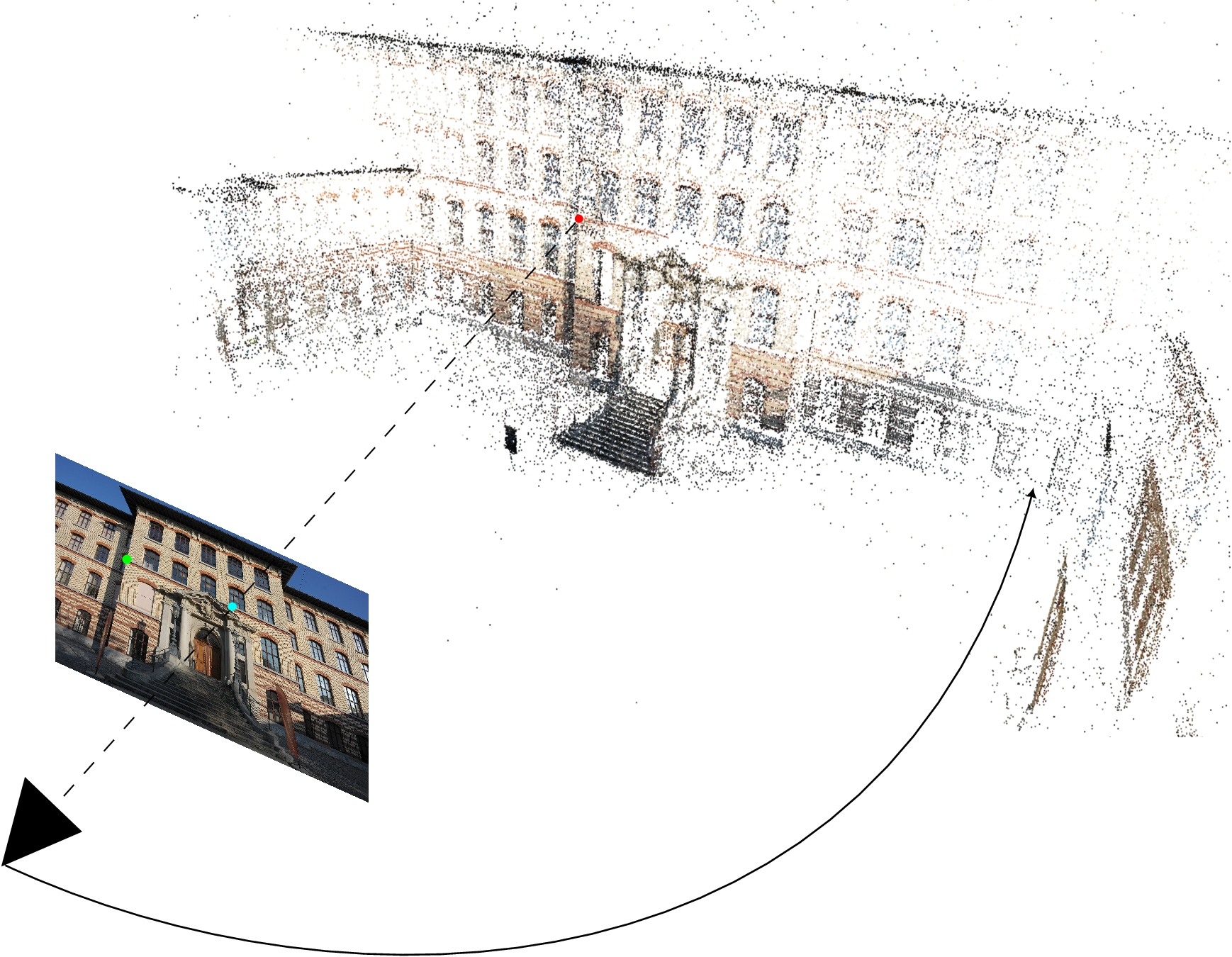}}%
    \put(0.7070424,0.11340818){\color[rgb]{0,0,0}\makebox(0,0)[lt]{\lineheight{1.25}\smash{\begin{tabular}[t]{l}${\scriptstyle \mathbf{x} = (\mathbf{R},\mathbf{t})}$\end{tabular}}}}%
    \put(0.4508839,0.60358958){\color[rgb]{1,0,0}\makebox(0,0)[lt]{\lineheight{1.25}\smash{\begin{tabular}[t]{l}${\scriptstyle \mathbf{M}_i}$\end{tabular}}}}%
    \put(0.09742143,0.33224895){\color[rgb]{0,1,0}\makebox(0,0)[lt]{\lineheight{1.25}\smash{\begin{tabular}[t]{l}${\scriptstyle \mathbf{I}_i}$\end{tabular}}}}%
    \put(0.20032488,0.28645005){\color[rgb]{0,1,1}\makebox(0,0)[lt]{\lineheight{1.25}\smash{\begin{tabular}[t]{l}${\scriptstyle \mathbf{Q}_i}$\end{tabular}}}}%
  \end{picture}%
\endgroup%
 \caption{Depiction of the PnP problem.
    The goal is to find the pose $\mathbf{x}$, which minimizes the error
    $\mathbf{dI} = \|\mathbf{Q}_i - \mathbf{I}_i\|$ between the image measured
    points $\mathbf{I}_i$ and the map points $\mathbf{M}_i$ projected to the image
    $\mathbf{Q}_i$.
  }
  \label{fig:pnp}
\end{figure}

\subsection{Visual Localization}
Visual localization comprises a series of approaches that infer a device's
location from local features extracted from visual data, traditionally derived
from a set of 2D or 3D input images depicted in \Cref{fig:pnp}.
This process commonly employs structure-based approaches, representing scenes
through point clouds and feature matching between retrieved 2D images and 3D
point clouds.
Point clouds, i.e. maps, may be generated using Structure-from-Motion
(SfM)~\cite{yang_high-precision_2009}, constructed simultaneously, or stored
from previous mapping of the environment.
Such approaches that rely on sharing image features from local 2D images (such
as sending them to an external server for analysis) risk leaking private
information about the environment to the processing entity, as features are
known to be susceptible to image
inversion~\cite{weinzaepfel_reconstructing_2011}.
In practice, several studies demonstrate the recovery of image content from
gradient-based feature descriptors, and/or locations, in algorithms such as
SIFT~\cite{lowe2004} and HOG~\cite{balntas2017hpatches}, rendering such methods
vulnerable to privacy threats when revealed, as one can reasonably reconstruct
the image.

Alternatives to such structural approaches exist in the realm of learned
localization~\cite{liu_range-based_2019}, which replaces part or all of the
localization pipeline with machine-learning based optimization.
Several emerging families of machine-learning based localization methods
demonstrate high degrees of performance comparable or better than image
retrieval techniques~\cite{dusmanu2019d2}.
However, these learning-based methods currently fail to scale in complexity
beyond small and relatively simple
scenes~\cite{brachmann_visual_2021,cavallari2019let}, and to be as accurate as
geometric methods~\cite{sattler2019}.
Furthermore, such methods are comparably vulnerable to image recovery from
model features~\cite{pittaluga2019revealing}.

Classical techniques i.e., non-learning based, to solve the Perspective-n-Point
(PnP) problem consist of three main approaches.
The first focuses on robust estimation, by removing outliers.
A minimal solver is used in conjunction with a RANSAC~\cite{fischler1981} loop.
In the context of camera pose estimation, the minimal number of 2D-3D
correspondences required to obtain a solution is three~\cite{persson2018}.
The second type of approaches solves for the pose using two-steps 1) estimate
the 3D points in the camera coordinate system and; 2) solve the 3D-3D pose
problem, which has a closed-form solution~\cite{umeyama1991}.
An example of this type of approach is presented in \cite{lepetit2009}.
The third type of approach focuses on solving an optimization problem to obtain
the pose.
Variants of this approach exploit specific problem formulations as in
\cite{lu2000}, or in specific solvers, such as Levenberg-Marquardt (LM) and
Gauss Newton (GN).

In the field of computer vision, the PnP problem~\cite{lepetit2009} consists of
estimating the pose (rotation and translation) of a camera with respect to a
world coordinate system.
The pose is estimated by exploiting a set of 2D feature locations (e.g. points
in an image) and their corresponding 3D points (e.g.~a 3D map of the
environment).
If a camera can move freely in 3D space its pose has six degrees of freedom,
three rotational and three translational, about the three Cartesian axis.
Applications in robotics and AR/VR often use PnP solvers as part of larger
processing pipelines including control~\cite{chaumette2006} and
mapping~\cite{schonberger2016}.
This work focuses only on pose estimation with the PnP problem.

\subsection{Privacy-Preserving Localization}
\label{sec:prelim_pploc}
Several existing methods implement privacy preserving visual localization and
other image query technologies.
Speciale et al.~\cite{speciale2019} introduce a line cloud-based method for
localization that relies on transforming 3D point clouds to line clouds to
prevent the types of image inversion techniques described
in~\cite{weinzaepfel_reconstructing_2011}.
This work has been extended/adapted to applications in Simultaneous
Localization and Mapping (SLAM)~\cite{geppert2021privacy} and
SfM~\cite{geppert2020privacy} systems.
However, the authors note several limitations to the privacy preservation
qualities of this method themselves, namely that reconstructing the secret
image used as localization input becomes easier with repeated invocation.
Chelani et al.~\cite{chelani2021privacy} further demonstrate additional threats
posed by this method that undermine its privacy guarantees in that the secret
image can be fully reconstructed in just one invocation.
A comparison to this work is shown in~\Cref{tab:related_comparison}.

Outside of methods directly implementing privacy-preserving localization, other
work applies privacy-enhancing technologies to related image query technologies
in ways that could be extended to visual localization algorithms.
Dusmanu et al.~\cite{dusmanu2021} introduce a privacy-preserving method for
extracting image features using adversarial affine subspace embeddings.
Recently, this approach was shown to be vulnerable to attacks that can recover
the input image~\cite{pittaluga2023ldpfeat} precluding its use for
privacy-preserving localization.
The solution proposed to address attacks presented
in~\cite{pittaluga2023ldpfeat} relies on differential privacy which is not
appropriate for the localization setting we consider.
Since privacy loss increases with the number of queries, localization on an
unchanging sequence of images (like a stationary camera) may quickly exceed a
reasonable privacy budget.

Engelsma et al.~\cite{engelsma2022hers} demonstrate a homomorphically encrypted
representation search that could be further adapted to existing visual
localization algorithms.
This represents a promising frontier, but is currently in the early stages of
application and does not scale to such processes in its current form.

The performance envelope of privacy preserving localization proposed in this
work is not well suited for time-critical applications like AR/VR, as we later
show.
There are, however, better suited applications which are less time-sensitive
such as Google Visual Positioning System, Facebook Livemaps, and Microsoft
Azure Spatial
Anchors~\cite{googlevps,facebooklivemaps,microsoftspatialanchors}.
These services offer maps supporting localization for thier users.
Privacy preserving localization would allow a user to localize an image or
video using a mapping service without the user sharing their image or location
for privacy reasons, and without the map provider needing to share their map
for commercial reasons.

As another example of where privacy is important in localization, consider a
delivery robot which must navigate private facilities to deliver packages.
The robot is not trusted to learn the full map of these facilities, nor is it
allowed to share images it has taken as they may reveal people's faces or
confidential information from the facility.
Privacy preserving localization keeps images on device and the full map of the
facilities off device.
The robot may run MPC-based localization with two map servers where one is
owned by the robot's manufacturer and the other by the facility who uses their
confidential map as input to the protocol.
The robot does not learn the full map of the facility, nor do the map servers
learn the robot's location or images.

\subsection{Homomorphic Encryption and Localization}
The localization algorithms considered are particularly challenging to adapt to
execution via homomorphic encryption (HE).
Most problematic, is that localization is a high depth computation with
significant data dependencies between steps.
For example, to perform one iteration of Levenberg-Marquardt localization, over
1,000 ciphertext-ciphertext divisions and over 7,000 multiplications are
required in a high depth circuit, as we later show in~\Cref{fig:arith_ops}.
The high degree of dependencies between input data does not suit batching based
optimizations for HE schemes popular in BGV, BFV and CKKS where multiple
plaintexts are encrypted and operated on together~\cite{bgv,bfv,ckks}.
This, and the large fractional precision required, suggests localization would
best suited for a boolean-based FHE scheme.
A popular implementation of one such scheme is tfhe-rs which claims to evaluate
a boolean gate in 8.5~ms~\cite{TFHE-rs}.
Given there are 4979 AND and XOR gates per floating point multiplication, for
example in EMP-Toolkit's floating point circuits~\cite{emp-toolkit}, the time
required to perform one iteration of Levenberg-Marquardt based localization is
estimated to be at least 423 seconds for the multiplications alone.
For these reasons, homomorphic encryption is not further considered in favor of
an MPC-based approach.
While MPC has higher communication complexity than HE, i.e.~communication is
proportional to the size of the computation vs the size of just the inputs, we
later show this cost is practical.

\begin{table}[]
  \centering
  \begin{tabular}{rccccc}
    \multicolumn{1}{c}{}
    &
    \begin{tabular}
      [c]{@{}c@{}}
      Private \\ Features
    \end{tabular}
    &
    \begin{tabular}
      [c]{@{}c@{}}Private \\ Map
    \end{tabular}
    &
    \begin{tabular}
      [c]{@{}c@{}}Private \\ Pose
    \end{tabular}
    &
    \begin{tabular}
      [c]{@{}c@{}}Round \\ Complexity
    \end{tabular}
    &
    \begin{tabular}
      [c]{@{}c@{}}Speed
    \end{tabular}
    \\ \cline{2-6}
    \multicolumn{1}{r|}{\cite{speciale2019}} & \xmark{} \cite{chelani2021privacy} & \xmark{} & \xmark{} & 1 & fastest \\
    \multicolumn{1}{r|}{\cite{dusmanu2021}} & \xmark{} \cite{pittaluga2023ldpfeat} & \xmark{} & \xmark{} & 1 & fastest \\
    \multicolumn{1}{r|}{DO} & \cmark{} & \cmark{} & \cmark{} & 1 & slow\\
    \multicolumn{1}{r|}{\shortalgo{}} & \cmark{} & \cmark{} & \cmark{} & data dependent & fast
  \end{tabular}
  \vspace{2ex}
  \caption{Comparison to related privacy preserving localization techniques.
    DO is our data-oblivious garbled circuits baseline and \shortalgo{} is our
    proposed \algo{} technique.
  }
  \label{tab:related_comparison}
\end{table}

\section{Security Model}
\label{sec:secmodel}
The security guarantees we aim to capture encompass two settings, as previously
mentioned.
In the first setting, we envision a lightweight client who would like to
leverage the resources of two or more capable offload servers.
The client has all inputs to the computation, the camera image and 3D map, and
does not trust the offload servers with its secret input data.
Offloading computation from the client saves power and frees resources for
other tasks.
In the second setting, the data is instead split between two parties where the
input images are known to the client and the map is owned by a third party.
The map owner does not want to share the map, but wants to help devices
localize using it.
For example, a drone using visual localization may be captured thus is not
trusted to hold a confidential map, while the map owner should not learn the
drone's location or camera images.
We consider the semi-honest security model, where participants are not trusted
with secrets but are expected to follow the protocol prescribed of them.

While semi-honest security is often considered a stepping stone to more robust
security models, even achieving semi-honest security is difficult especially in
a performance sensitive setting.
The authors of~\cite{speciale2019} note their line cloud transformations
technique is inversely proportional to the amount of data processed in series.
It was later discovered that the secret image can be fully reconstructed after
just one localization run~\cite{chelani2021privacy}.
This was significant as it showed simply knowledge of which algorithm is used
to extract features from the camera's image along with the obfuscated feature
coordinates is enough to reasonably recover the image.
Due to this, we find it necessary to introduce the more rigorous,
simulation-based definition of security for privacy preserving localization.
Complexity-based cryptography, or provable security, was first introduced in
the context of encryption in 1984~\cite{shafi1984probabilistic}.
Schemes considered secure under this definition are sequentially composable:
the schemes can be composed without requiring proving that the overall scheme
is secure.
Given computer vision tasks are often composed e.g.\ running localization
repeatedly on a stream of camera frames or in combination with random sample
consensus algorithms, we choose this framework in which to define security for
privacy preserving localization.

In MPC, security against semi-honest adversaries is formulated as a game
referred to as the real-ideal paradigm.
Informally, there is an attacker who runs a protocol and remembers all the
messages they receive and send.
If another algorithm called a simulator can generate a convincing set of
messages that resembles what a real attacker would see during protocol
execution without actually running the protocol, the scheme is considered
secure.
Next we define this game more formally in the context of localization using
notation and format from Lindell~\cite[Chapter-6,
  Section-4.2]{lindell2017simulate}.
The inputs to localization, namely the set of features taken from an image and
from a map, as well as the output pose are considered secret, though the secret
map features may be owned by the device or another party.
This definition considers a streaming model where multiple poses are computed
from a set of images and maps.
We note that localization algorithms are deterministic (nonrandomized) by
nature which allows for a slightly simpler definition of security as the output
can always be computed from inputs alone and thus parties in the protocol may
be simulated individually.

\begin{tightitemize}
  \item Let $l(\textbf{I},\textbf{M}, x)$ be a probabilistic
  deterministic polynomial-time localization algorithm which takes as input
  a set of $\textbf{I} = \{\mathbf{i}_0, \ldots, \mathbf{i}_{i-1}\}$ image
  feature locations, a set $\textbf{M} = \{\mathbf{m}_0, \ldots,
    \mathbf{m}_{i-1}\}$ map feature locations, and initial pose estimate
  $\mathbf{x}$.
  $l$ returns a refined pose, $\mathbf{x}_{\text{refined}}$.
  \item Let $L(\textbf{F})$ compute $l$ on every member of a set of
  $j$ image features, map features, and initial pose estimates
  \break
  $\textbf{F}=\{\{\textbf{I},\textbf{M},\textbf{x}\}_0, \ldots,
    \{\textbf{I},\textbf{M},\textbf{x}\}_{j-1} \}$.
  \item Let $\pi$ be an $n$-party protocol which computes $L$.
  The view of the $k$\nobreakdash-th party during an execution of $\pi$ is
  \linebreak $\text{view}_{k}^{\pi}(\textbf{F}) = (w, r^k; m_0^k, \ldots, m_h^k)$
  where $w$ is the party's input, $r^k$ is the $k$-th party's randomness, and
  $m_n^k$ represents the $n$-th message that it received.
  The output of $\pi$ to party $k$ is $\text{output}_{k}$ and is implicitly part
  of the party's view because $\pi$ is deterministic.
\end{tightitemize}
\vspace{1ex}

Protocol $\pi$ securely computes $L$ in the presence of static, semi-honest
adversaries if there exists probabilistic polynomial-time algorithms $S_0,
  \ldots, S_{n-1}$ such that
\begin{equation}
  \{ S_k(1^{\kappa}, w_k,
  \text{output}_{k}) \}_{\textbf{F}, \kappa}
  \mathrel{\stackrel{\makebox[0pt]{\mbox{\normalfont\tiny c}}}{\equiv}}
  \{\text{view}_{k}^{\pi} (\textbf{F})\}_{\textbf{F}, \kappa}
  \label{eq:security_def}
\end{equation}
where, the symbol
$\mathrel{\stackrel{\makebox[0pt]{\mbox{\normalfont\tiny c}}}{\equiv}}$ refers
to computational indistinguishability with security parameter $\kappa$.
Given the two views, the probability they can be distinguished by a non-uniform
polynomial-time algorithm is negligible, i.e. $O(\frac{1}{2^\kappa})$.

This definition has some differences compared to a straight forward application
of simulation-based definition of security from generic MPC.
First, this definition considers a ``streaming'' model where multiple poses are
computed from a set of images and maps $\mathbf{F}$.
The streaming model captures temporal dimension where localization is executed
multiple times sequentially.
This is a common real-world setting for localization where a device is
continuously localizing on a stream of images from a camera, as in robotics,
AR/VR, etc. Directly applying the standard definition of security to
localization would require each localization execution to be simulated
independently.
The streaming model however supports more efficient protocols, as we later show
with our \algo{} approach, because localization executions do not need to be
independent.
From a security perspective, it is sufficient that a sequence of localization
executions do not leak any information about their inputs, but simulating each
independently is unnecessary.

This definition gives insight into where previous privacy preserving
localization attempts have failed.
Line cloud transformations of Speciale et.
al. cannot be simulated; the
lines contained within the $\text{view}$ are not known to the simulator and
thus the difference between the real and simulated views can be easily
distinguished, as practically proven by Chelani et. al.~\cite{speciale2019,
  chelani2021privacy}.
Granted, line clouds were only meant to prevent image reconstruction and do not
attempt to hide map feature locations $\textbf{M}$ or the output pose
$\mathbf{x}$ (see~\Cref{tab:related_comparison}).
However, just considering input feature locations, an adaptation of the above
definition to remove the map $\textbf{M}$ and output pose $\mathbf{x}$ still
suggests line cloud transformations are not secure.
Furthermore, line cloud transformations do not consider the streaming model;
authors note their technique is inversely proportional to the amount of data
processed in series.
A concrete definition of security is needed to capture repeated invocation via
a streaming model to avoid this leakage.

\section{Overview of Privacy Preserving Localization}
\label{sec:overview}
Our primary contribution is a practical method and implementation of privacy
preserving localization $L(\mathbf{F})$ on a set of image and map features
$\textbf{F}$ meeting the security definition in \Cref{sec:secmodel}.
This requires overcoming two types of challenge, the first being practical
challenges stemming from the large size and depth of the computation required.
Secondly, is addressing the iterative nature of localization, for which we
introduce \algo{}.
But first, we introduce localization as a functionality.

\begin{algorithm}
  \caption{Pose Estimation}\label{alg:generic_localization}
  \begin{flushleft}
    \hspace*{\algorithmicindent} \textbf{Inputs:} Image feature locations $\mathbf{I}$, \\
    \hspace*{\algorithmicindent} \hspace*{8ex}Map feature locations $\mathbf{M}$, \\
    \hspace*{\algorithmicindent} \hspace*{8ex}Initial pose estimate $\mathbf{x}$. \\
    \hspace*{\algorithmicindent} \textbf{Output:} Pose $\mathbf{x_{\text{refined}}} = l(\mathbf{I}, \mathbf{M}, \mathbf{x}).$
  \end{flushleft}
\end{algorithm}

This work considers privacy preserving localization in two contexts.
The first is a client-server setting where a lightweight client wishes to
offload localization to more capable servers, e.g.\ when the client is a mobile
device with limited power or computational resources.
In this setting, the client encrypts its input image and map features and sends
them to the offload servers.
The offload servers compute on the ciphertext and each returns a ciphertext
representing the result back to the client.
The client can then construct the plaintext resultant pose including position
and orientation $x = l(\mathbf{I}, \mathbf{M})$.
Both input features and output pose are considered confidential in contrast to
related approaches in which only the input features are
secret~\cite{speciale2019}.
The offload servers are expected to be within different administrative domains,
for example two providers offering private localization as a service.
The details of how data is sent to and retrieved from offload servers depends
on the underlying MPC protocol, covered later, however at a high level,
protocols based on secret shares split input data into additive pieces while
garbled circuit protocols send inputs encoded as circuit wire labels.

The second setting considered is where a client holds the features extracted
from their image $\mathbf{I}$, but the 3D map $\mathbf{M}$ is held by a third
party.
The client and third party interact to compute the pose.
The client should not learn anything about the map other than what can be
inferred from the pose, and the third party should learn (informally) nothing.
In the real world, the full map $\mathbf{M}_{\text{full}}$ likely contains more
features than the client's image $\mathbf{I}$.
In order to keep $\mathbf{M}_{\text{full}}$ confidential, we assume the
existence of an alignment functionality to match the client's features in
$\mathbf{I}$ to the subset of features shared by the map $\mathbf{M}_{full}$,
specifically, $\mathbf{M} = align(\mathbf{I}, \mathbf{M}_{full})$.
We believe the alignment functionality is a reasonably straightforward
application of private information retrieval~\cite{chor1998private} and leave
concrete instantiation to future work.
When alignment is composed with localization, i.e. $x = l(\mathbf{I},
  \text{align}(\mathbf{I}, \mathbf{M}_{\text{full}}))$, the client learns neither
the exact map features $\mathbf{M}$, nor map features which did not have a
corresponding image feature, i.e.
$\{x : x \in \mathbf{M}_{\text{full}} \text{ and } x \notin \mathbf{M}\}$.

In both settings considered, the core challenge to instantiating $l$ in a
privacy preserving way is in addressing the iterative nature of localization.
Iterative algorithms like localization are not friendly to running under MPC
because they are not data-oblivious, meaning control flow depends on secret
data i.e.\ the convergence criteria.
To run an iterative algorithm under MPC, an upper bound of iterations must be
specified a~priori, regardless of convergence at runtime, to ensure the rate of
convergence is not leaked.
If the rate of convergence were leaked, it reveals sensitive information
allowing the devices position to be inferred, especially when localization is
run repeatedly like on sequential camera frames (as is common in robotics,
AR/VR, and autonomous vehicles).
The iteration leakage problem has been studied previously in the context of
privacy preserving SAT solvers~\cite{ppsat, mpclp} and there is no direct
solution; an upper bound of iterations must be executed to hide the real number
of iterations required for convergence.
Localization is particularly problematic as it contains an outer gradient
descent iterative algorithm, each iteration of which computes the singular
value decomposition (SVD) to (pseudo) invert a matrix which is itself an
iterative algorithm.
Our \algo{} approach addresses this, instantiated using an appropriate
localization algorithm and MPC protocol given the context.

We have identified two key observations to address the wasted work required by
the iterative localization algorithm considered.
The first key is when computing the SVD: our analysis shows the number of
iterations required does not in fact depend on the input data in the case of
the considered localization algorithm.
As such, the optimal number of SVD iterations can be fixed a~priori, which
turns out to be data-independent and wastes no work.
The second key is to execute each iteration of the outer gradient decent
algorithm independently.
Doing so is secure under the assumption that a certain number of localization
executions will be run in series, as we later show more formally.
This turns out to be quite amenable to how localization is run in practice.
In the context of autonomous vehicles, robotics, AR/VR, etc.\ localization is
invoked repeatedly on sequential camera frames.
Thus, with these observations and algorithm-cryptography co-design, we then
demonstrate practicality of our approach with \snail{}, a proof of concept
robot which securely offloads localization to navigate its environment.

\section{Design}
\label{sec:design}
The design of a privacy preserving localization method for MPC concerns four
critical choices: selecting an appropriate localization algorithm, an MPC
protocol and implementation, a data representation, and addressing issues with
data-obliviousness given the iterative nature of optimization-based
localization.
The design is presented in this order as each choice is dependent on the
previous.
Design choices are justified them in terms of their wall clock runtime and
security implications, where applicable.

\subsection{Localization Algorithm}
\label{sec:loc_alg}
The vast number of localization algorithms is at odds with the significant
engineering effort required to build them in privacy preserving frameworks.
This process is time-consuming as algorithms must be re-written from scratch to
use privacy preserving data types, be made data-oblivious, and correctly stage
data in and out of the privacy preserving framework.
Thus, this work considers only the most popular approach to localization
consisting of solving a non-linear least-squares problem which minimizes the
reprojection error~\cite{Hartley2004}.
Specifically, find the pose $\mathbf{x}$ (position and orientation) which
minimizes the error $\mathbf{dI}$ between the set of projected image points
$\mathbf{Q} = \text{proj}(\mathbf{x},\mathbf{M})$, and the corresponding
measured image points $\mathbf{I}$.
$\mathbf{M} = [\mathbf{x}^M,\mathbf{y}^M,\mathbf{z}^M]^\top$ is the set of 3D
map points.
Notice the one-to-one correspondence between each point in $\mathbf{M}$ and
$\mathbf{I}$, i.e., for each 3D point in the map, we can observe it in the
image, a measured 2D point.
Formally, the problem is defined as
\begin{equation}
  \argmin_{\mathbf{x}}
  \hspace{1ex} \sum_{i=1}^n \mathbf{dI}^2, \, \text{with} \, \mathbf{dI} =
  \|\mathbf{Q}_i - \mathbf{I}_i\|,
  \label{eq:reproj_error}
\end{equation}
where
$n$ is the number of 2D and 3D points.
Subscript $i$ denotes the $i$th point $\mathbf{Q}_i = [x_i^Q, y_i^Q]$ projected
into the image plane from 3D point $\mathbf{M}_i = [x_i^M, y_i^M, z_i^M]^\top$
and similarly for the $i$'th measured image point $\mathbf{I}_i = [x_i^I,
  y_i^I]$.

Point projection is computed with intrinsic camera parameters (focal length
$\mathbf{f} = [f_x,f_y]$ and image center $\mathbf{c} = [c_x,c_y]$), pose
$\mathbf{x}$ (represented as Euler finite rotation matrix $\mathbf{R}$, and
translation $\mathbf{t} = [t_x,t_y,t_z]^\top$), and 3D world point
$\mathbf{M}_i$ as
\begin{equation}
  proj(\mathbf{x}, \mathbf{M}_i) =
  \begin{bmatrix}
    f_x & 0   & c_x \\
    0   & f_y & c_y \\
    0   & 0   & 1
  \end{bmatrix}
  \begin{bmatrix}
    r_{1,1} & r_{1,2} & r_{1,3} & t_x \\
    r_{1,1} & r_{2,2} & r_{2,3} & t_y \\
    r_{3,1} & r_{3,2} & r_{3,3} & t_z
  \end{bmatrix} \begin{bmatrix}
    x_i^M \\
    y_i^M \\
    z_i^M \\
    1
  \end{bmatrix} ,
  \label{eq:projection}
\end{equation}
where $r_{ij}$
represent the entries of $\mathbf{R}$ at row $i$ and column $j$.
A depiction of the PnP problem in presented in \cref{fig:pnp}.
For a detailed explanation of 3D point projection refer to \cite[Section
  3.3]{ma2004}.

To solve the non-linear least-squares problem, we consider two optimization
algorithms: Gauss-Newton (GN) and Levenberg- Marquardt (LM) with Fletcher's
improvement~\cite{fletcher1971modified}.
The pose update $\mathbf{dx}$ for each is defined as
\begin{equation}
  \begin{split}
    \text{GN: } & \mathbf{dx} = (\mathbf{J}^\top \mathbf{J})^{-1}
    \mathbf{J}^\top \mathbf{dI} \\ \text{LM: } & \mathbf{dx} = \Bigl(
    \mathbf{J}^\top \mathbf{J} + \lambda \ \text{diag}(\mathbf{J}^\top \mathbf{J})
    \Bigl)^{-1} \mathbf{J}^\top \mathbf{dI},
  \end{split}
  \label{eq:optimization}
\end{equation}
where the Jacobian matrix $\mathbf{J}$ is computed numerically
from the right-hand side of \Cref{eq:projection} by perturbing each element of
the pose by epsilon using a one hot encoded vector and projecting the points
using the perturbed pose.
The reprojection error $\mathbf{dI}$ is computed as the difference between the
projected image points $\mathbf{Q}$ and the measured points $\mathbf{I}$.
LM is presented at a high level in \Cref{alg:localization} with a more detailed
description in reference~\cite{nocedal1999}.
Note \Cref{alg:localization} computes the pose for a single image and map
$l(\mathbf{I}, \mathbf{M})$, while this work is concerned with pose estimation
for a sequence of images and maps $L(\mathbf{F})$ as described by the security
model in \Cref{sec:secmodel}.

\begin{algorithm}
  \caption{Levenberg-Marquardt Pose Estimation}\label{alg:localization}
  \begin{flushleft}
    \hspace*{\algorithmicindent} \textbf{Inputs:} Image features $\mathbf{I}$, Map features $\mathbf{M}$, \\
    \hspace*{\algorithmicindent} \hspace*{8ex}Initial pose estimate $\mathbf{x}$. \\
    \hspace*{\algorithmicindent} \textbf{Public Parameters:} Intrinsic camera parameters $\mathbf{f}$ and $\mathbf{c}$. \\
    \hspace*{\algorithmicindent} \hspace*{20ex} Convergence criteria $c$, Fletcher's $\lambda$. \\
    \hspace*{\algorithmicindent} \textbf{Output:} Refined pose $\mathbf{x_{\text{refined}}} = l(\mathbf{I}, \mathbf{M}, \mathbf{x}).$
  \end{flushleft}
  \begin{algorithmic}[1]
  \While{not converged}
    \State{}$\mathbf{Q} \gets proj(\mathbf{x}, \mathbf{M})$
    \Comment{Point projection}

    \For{$\text{each degree of freedom } d \gets 1, 6$}
      \Comment{Jacobian}
      \State{}$\mathbf{J}[:][d] \gets \frac{\partial}{\partial \mathbf{x}_d}  proj(\mathbf{x}+(\text{one\_hot}_d*{\epsilon}),\mathbf{M})$,
    \EndFor{}

    \State{}$\mathbf{dI} \gets \mathbf{Q}_i - \mathbf{I}_i$
    \Comment{Reprojection error}

    \State{}$\mathbf{dx} \gets \Bigl( \mathbf{J}^\top \mathbf{J} + \lambda \ \text{diag}(\mathbf{J}^\top \mathbf{J}) \Bigl)^{-1} \mathbf{J}^\top \mathbf{dI}$
    \State{}$\mathbf{x} \gets \mathbf{x} + \mathbf{dx}$
    \Comment{Pose update}

    \If{$\sum_{i=1}^n \|\mathbf{dI}\|^2 \leq c$}
    \Comment{Convergence criteria}
    \State{}Converged. Output $\mathbf{x}$.
    \EndIf{}
  \EndWhile{}
  \end{algorithmic}
\end{algorithm}

\begin{figure}[t!]
  \centering
  \includegraphics[width=.9\columnwidth]{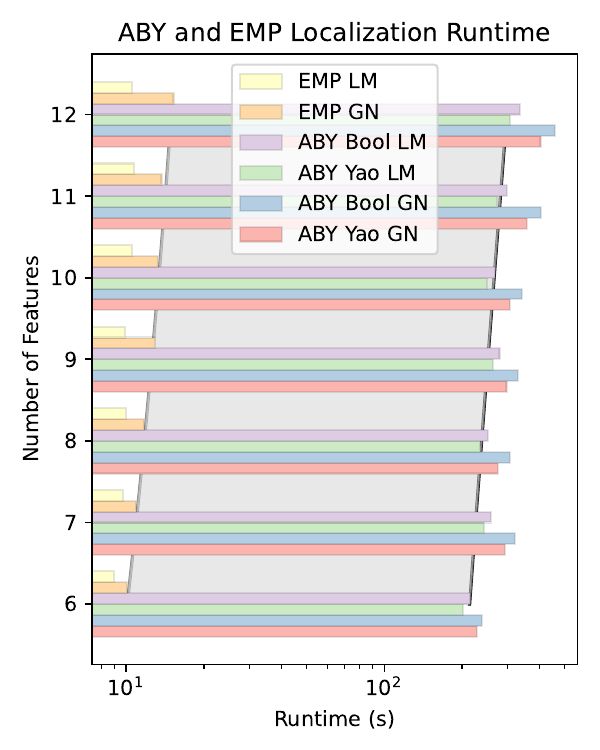}
  \caption{Time to localize using ABY and EMP secure computation frameworks on
    feature data from ETH3D~\cite{schops2017multi}.
    Each measurement is an average of three trials using randomly selected points.
    The same random sequence of points are used across measurements.
    Note the log scale of the x-axis.
    The highlighted area is the difference between the slowest EMP configuration
    and the fastest of ABY.
  }
  \label{fig:aby_vs_emp}
\end{figure}

LM is selected for its wide popularity, real-world usage, and generality as it
is the default PnP algorithm in OpenCV~\cite{opencv_library} and accepts any
number of input point correspondences making it useful in combination with
other localization related tasks like RANSAC~\cite{fischler1981}.
GN is considered as it requires the same set of linear algebra operations as
LM, thus is easy to implement, and also requires fewer matrix multiplications
at the cost of inverting a larger matrix.
While there are many more algorithms, LM and GN are the most popular and
require a narrow variety of operations, i.e., matrix
arithmetic~\cref{eq:reproj_error}, point projection~\cref{eq:projection}, Euler
finite rotation transformation~\cref{eq:projection}, and a matrix (pseudo)
inverse via SVD~\cref{eq:optimization}.
Furthermore, these algorithms can be directly applied, or easily adjusted, to
other computer vision problems like relative pose
estimation~\cite{bartoli2004}, homography \& image
alignment~\cite{Hartley2004}, and Structure-from-Motion
(SfM)~\cite{schonberger2016}.

\subsection{MPC Library}
This work compares localization algorithms implemented using two MPC libraries
secure in the semi-honest setting, ABY and EMP.
ABY is a flexible MPC library which performs secure computation using
arithmetic, boolean, and Yao's garbled circuits~\cite{demmler2015aby}.
ABY is flexible in that it supports switching between protocols to allow the
most efficient protocol to be used for differnt parts of a computation.
EMP, on the other hand, is a framework with many protocols, one of which is a
highly efficient garbled circuits implementation of the semi-honestly secure
half-gates protocol and optimizations~\cite{zahur2015two,
  kolesnikov2008improved}.
One major difference between these implementations that is not immediately
clear is ABY pre-generates circuits before executing them while EMP generates
and executes garbled circuits on the fly.
Circuit pre-generation has the disadvantage in that large circuits require
large amounts of memory.
Running our data-oblivious LM-based localization with 6 input correspondences
and default security parameter results in out of memory errors consuming over
100~GB of RAM (including swap) in ABY.
EMP on the other hand completes using less than 1 GB of RAM.

While ABY and EMP's garbled circuits have similar theoretical performance,  in
practice the lightweight nature of EMP gives it a significant advantage for
this application, as shown in~\Cref{fig:aby_vs_emp}.
ABY's circuit pre-generation means that circuit structure can be re-used after
being built in contrast with EMP's on-the-fly style execution.
This is an artifact of the MPC framework and not the underlying protocols.
However, only considering the online time of ABY and ignoring all setup costs
like base oblivious transfer and circuit generation, ABY is still two orders
magnitude slower than the total runtime of EMP due to the impact of the
library's heavy utilization of the memory allocator.

This finding is surprising because in theory the two frameworks rely on similar
garbled circuits protocols yet runtime varies widely.
Simply measuring the number of gates in a localization circuit and multiplying
by the time it takes to evaluate a single logic gate is not an accurate way to
estimate the runtime of frameworks which pre-generate circuits.
The impact of ABY's memory allocations reduces the performance of the framework
regardless of protocol, be it boolean or Yao.
Note ABY's arithmetic protocol is not of interest in this context because
localization requires operating on floating point data types, as later
discussed.
We suspect there are optimizations we could make to our ABY implementation like
aggregating operations across multiple integers to amortize certain
cryptographic overheads and switching between Arithmetic, Boolean, and Yao
protocols dynamically, however, this optimization process is even more
time-consuming than the initial implementation within the privacy preserving
framework and it is unlikely to make up the orders of magnitude difference to
EMP.
As such, further analysis only considers EMP.

\subsection{Data Representation}
Given promising results from using alternative ways to represent data in the
field of machine learning, we consider various representations in privacy
preserving localization~\cite{gupta2015deep, lesser2011effects}.
Plaintext arithmetic operations on floating point data are accelerated on
modern hardware making localization on floating point data fast (around 10~ms
for 10k operations) and thus other representations like fixed point are not
usually considered.
In contrast, when performing floating point operations under the MPC frameworks
considered, there are no floating point functional units.
Garbled circuit logic gates are evaluated in the same manner no matter how data
is represented.
For this reason we consider both fixed and floating point data types of various
width for privacy preserving localization.

Pose estimation is typically computed with double precision floating point
therefore the first question is whether localization is possible using single
precision floats and fixed point.
The second question is then of performance.
For measurement purposes we use the ETH3D dataset~\cite{schops2017multi}, a 3D
map and series of images taken from known locations commonly used to measure
accuracy and performance of PnP solvers.
Due to integer over/underflow, localization with 32-bit fixed point precision
is not possible.
On the other hand, 64-bit fixed point localization does converge but only when
the SVD, a sub-step in both GN and LM, is computed with floating point.
While even larger width fixed point representations may allow the SVD to
converge successfully, performance measurements suggest even 64-bit fixed point
is not useful.

\begin{figure}[t!]
  \centering

  \includegraphics[width=.9\columnwidth]{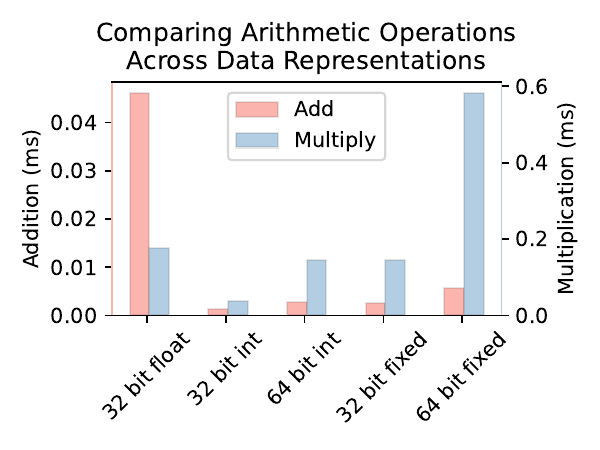}
  \caption{Time to perform addition and multiplication using
    floating point and fixed point data representation of various width
    with EMP's semi-honestly secure half-gates protocol.
  }
  \label{fig:float_vs_fixed}
\end{figure}

\begin{figure}[t!]
  \centering
  \includegraphics[width=.9\columnwidth]{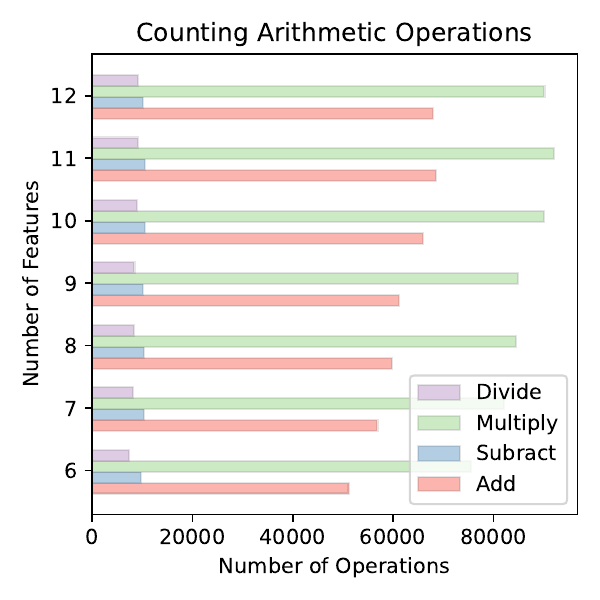}
  \caption{Arithmetic operations performed during LM localization using
    \algo{}.
    GN exhibits similar behavior.
  }
  \label{fig:arith_ops}
\end{figure}

Even though fixed point arithmetic is simpler than floating, comparing 64-bit
fixed point (the smallest width fixed point representation for which
localization converges) to 32-bit floating is not a clear performance win due
to the increase in data width.
In~\Cref{fig:float_vs_fixed} we see 64-bit fixed point addition is faster than
32-bit floating point but multiplication is slower.
This is expected as addition has linear complexity in the number of bits while
multiplication has quadratic.
Thus, determining which offers better performance for localization requires
knowledge of how many of each operation are performed.
\Cref{fig:arith_ops} shows multiplication is the dominant operation
in both LM and GN localization, thus
it is fastest to compute with floating point representation.
From these results, we project operations using 64-bit fixed point to be over
three times slower than 32-bit floating.
We also notice 64-bit fixed point converges $90\%$ less frequently than 32-bit
floating point on the ETH3D dataset, suggesting even wider data types may be
necessary to achieve similar convergence properties to 32-bit floating point.
Due to this performance degradation, we do not further analyze the impact of
fixed point on localization accuracy or the rate of convergence.
We leave the exploration of alternative floating point representations to
future work e.g.~half precision or bfloat16~\cite{bfloat}, as they have more
general implications to plaintext localization.

\subsection{Garbled Circuits for Computation Offload}
Garbled circuit protocols are often considered to run between two parties, each
of which has secret inputs to the function being computed.
In the offload setting we consider, instead the offload servers have no inputs
to the function and are not allowed to learn the output.
Achieving this requires some minor logistical modifications to the garbled
circuits protocol setup.

Recall two party garbled circuit protocols have two roles, a generator and an
evaluator.
The generator creates garbled circuit, which in practice mostly consists of
repeatedly evaluating the AES blockcipher.
The garbled circuit, i.e.~a set of ciphertexts, is then sent to the evaluator
who decrypts parts of the received ciphertext, a process that again mostly
consists of evaluating AES.
The evaluator learns a subset of the ciphertext representing inputs called
input wire labels via a primitive named oblivious transfer.
The evaluator then uses the input wire labels to decrypt the garbled circuit.

In the setting where all plaintext inputs and outputs are owned by the client,
for whom we additionally want to minimize computation and network usage, the
simplest way to get the correct input wire labels from generator to evaluator
is to have the generator send all labels to the client, who then forwards the
appropriate label per wire onwards to the evaluator.
We make a slight optimization from the above approach to reduce the bandwidth
requirements of the client.
The client chooses a random seed and sends it to the generator who uses a
pseudorandom function and the seed to create the wire labels for the garbled
circuit.
Note that the EMP toolkit framework already uses a seeded PRF to generate the
garbled circuit for performance reasons (/dev/random is slow); we have instead
allowed the client to choose this seed which is not an issue since the client
is the only party who supplies the secret input.
After, the generator creates the garbled circuit, they send it to the evaluator
excluding the input wire labels.
The client, knowing the circuit seed, can generate the input wire labels
without interaction and send them to the evaluator.
The circuit is evaluated as usual and the output labels are sent back to the
client who can decode their plaintext meaning.
This saves the client from needing to receive all wire labels from the
generator.
Instead, the client may generate them directly and send to the evaluator.
Since the same seed is used to generate input wire labels and the rest of the
garbled circuit, correctness holds.
Note in the case of EMP and the half-gates protocol, the client must also know
the global circuit delta~\cite{kolesnikov2008improved} as part of the
generation process.
Intuitively, it is okay for the client to learn the seed and delta as the
client is the only party who supplies secret inputs and is allowed to learn the
output.
While this is appropriate for the offload setting, in the setting where the
input image and map are known to different parties, this optimization is not
appropriate.

\begin{figure*}[t]
  \centering
  \includegraphics[width=1.9\columnwidth]{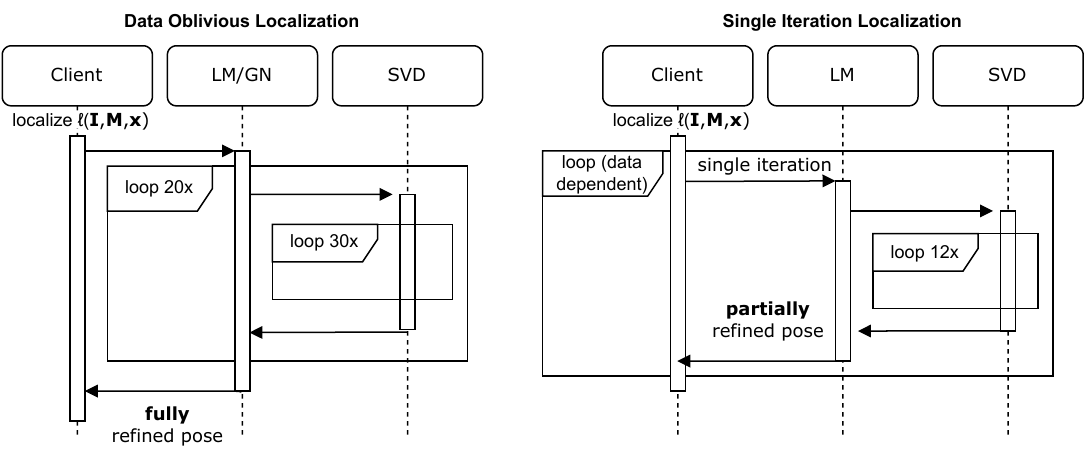}
  \caption{UML diagrams of a na\"ive data-oblivious adaptation of PnP
    localization via gradient descent and \algo{}.
    On the left, a constant upper bound of optimization iterations (default of 20
    LM iterations in OpenCV) are required.
    On the right, the number of optimization iterations is data-dependent which in
    practice is much fewer than 20 but comes at the cost of additional round
    complexity.
    Regarding the inner SVD algorithm, the number of iterations can be reduced from
    a constant upper bound on the left (default of 30 in Eigen/LAPACK) to the
    optimal number (12) on the right using public knowledge about the input
    distribution, namely two QR sweeps per singular value and one singular value
    for each physical degree of freedom (2*6).
  }
  \label{fig:umlalgo}
\end{figure*}

\section{\algo{}}
\label{sec:algo}
The localization algorithms considered (i.e. \Cref{alg:localization}) are
iterative by nature; they run a sequence of gradient descent steps until
convergence is reached.
Furthermore, each iteration requires computing the (pseudo) inverse of a matrix
which is itself an iterative process when using the popular algorithm from
Demmel and Kahan~\cite{demmel1990accurate}, used in the Eigen and LAPACK
libraries to reduce the bidiagonal form to singular values.
Executing such iterative algorithms in a data-oblivious setting (like under
MPC) have a critical drawback: a fixed number of iterations must be executed
regardless of the rate of convergence.
Either work is wasted by running more than necessary iterations, or
completeness of the solution is degraded by running too few.

Our approach maintains the invariant that control flow does not depend on
secret data but avoids wasting work by running on multiple inputs.
The outer iterative algorithm, LM optimization via gradient decent, is broken
apart such that each step runs independently.
Then, stringing the steps from multiple localizations together hides the
convergence rate of each individual localization.
The inner algorithm, the SVD, is addressed through finding that in practice
there is a constant and optimal of number of iterations which does not depend
on secret data.
We call the combination of the two techniques \algo{}.

Concerning the outer gradient decent algorithm, instead of running the entire
decent under MPC, we run each optimization step individually, hence the name
\algo{}.
Specifically, instead of running LM optimization until the pose meets the
convergence criteria, \shortalgo{} takes a single gradient descent step towards
the optimal pose and returns the intermediate result, as shown
in~\Cref{fig:umlalgo}.
The client can then decide after each step if convergence has been met and can
refine the pose by running \shortalgo{} with the same image and map.
When convergence has been met, they may provide a new image and map and start
to localize with the new inputs.
The MPC parties cannot tell whether the client is refining the pose from the
previous step or if they are computing the first step for a new image, because
of the security guarantees of the MPC protocol.
This approach requires that there is a stream of images to localize on.
If the client were to only localize a single image, the offload servers would
learn how many iterations it takes to compute the pose which is a security
issue as previously discussed.
Since localization is often evaluated repetitively on a sequence of camera
frames, repeated evaluation is desirable in practice.
Later in~\Cref{sec:eval} we better quantify how many images constitutes a
stream and the security properties.

There is still one remaining problem in \shortalgo{} as described.
A single gradient descent step requires inverting a matrix which is itself an
iterative algorithm whose convergence is data-dependent.
Just as before, this suggests an upper bound of iterations is required however
the previous solution to reduce the number of iterations does not work as this
problem is nested within the previous.
It turns out there is an optimal number of iterations which is constant and
known beforehand, and thus is easy to encode in the garbled circuit.
To see why, we describe more about how the SVD of a matrix is computed, the
underlying operation in matrix inversion.
To compute the SVD of a matrix it is first reduced to an upper bidiagonal form
using Householder transformations.
Then, the SVD is computed from the bidiagonal matrix using QR factorization,
eliminating the upper diagonal entries as it sweeps across the diagonal in an
iterative fashion.
The number of iterations taken to reduce the bidiagonal form to singular values
depends on the values of the matrix being decomposed.
A general rule of thumb is two QR sweeps per singular value~\cite[p.
  165,]{parlett1998symmetric}
but a bidiagonal matrix whose diagonal entries are
equal to the superdiagonal takes fewer QR factorization
iterations~\cite[Table 2.]{demmel1990accurate}
The key insight we rely on is that the number of singular values of the matrix
is always six as each corresponds to a physical degree of freedom.
The question is then if localization algorithms require inverting a matrix
which, after householder transformations (the first step in computing the SVD),
result in diagonal entries equal to the superdiagonal.
The geometric implication of this phenomenon with respect to localization is
not immediately clear, however, for every experiment performed including real
world images and those from the ETH3D dataset, there are no cases where a
single diagonal entry is within $\pm 10^{-5}$ of the superdiagonal, much less
all diagonal entries.
If there is a case where more QR sweeps are required, the accuracy of the
localization may be affected, however, this suggests the number of iterations
required to compute the SVD is a constant known ahead of time.
Namely, twelve QR sweeps are required for convergence, two for each singular
value, and there are six singular values each corresponding to a physical
degree of freedom.
The takeaway is that the \emph{fixed upper bound} of iterations required to
compute the SVD can be set as the optimal number of iterations which in
practice does not depend on the secret inputs.

As we later quantify, \algo{} is faster than executing one large garbled
circuit for each localization run when considering a stream of images, however
it comes at the cost of additional round complexity.
Instead of participants interacting once per image to share a garbled circuit,
they must instead do so for every intermediate gradient decent step.
In practice round complexity is not an issue as connectivity is already
assumed.
Furthermore, the network RTT time is much faster than the time to run one
iteration, so the effect of the additional rounds is minimal.

\subsection{Implementation}
\label{sec:impl}
As a baseline, we have implemented a data-oblivious adaptation of the
localization algorithms described in~\Cref{sec:loc_alg}.
The baseline maintains the invariant that control flow does not depend on
secret data, thus an upper bound of iterations is encoded for each of the
iterative routines, namely gradient descent and SVD.
The upper bounds are taken from the plaintext algorithms, e.g.\ the plaintext
LM algorithm from OpenCV runs at most 20 iterations, and the plaintext LAPACK
SVD algorithm runs at most 30 iterations.
The baseline is visualized in \Cref{fig:umlalgo} and has been implemented for
GN and LM localization, in both the EMP and ABY frameworks.

We compare the baseline against \algo{}, also implemented for GN and LM in both
ABY and EMP in C++, all of which are available under the MIT
license\footnote{\url{https://github.com/secret-snail/localization-server}}.
The ABY implementation is 3,887 lines of code, EMP is 1,924 lines, and
reference plaintext implementation used for automated testing is 848 lines.
Each implementation is an independent library that includes all necessary
linear algebra operations depending only on the respective MPC library,
including MPC-based adaptations of the SVD algorithm from Demmel and
Kahan~\cite{demmel1990accurate}.

\section{Evaluation}
\label{sec:eval}
Our evaluation is focused on both security and performance.
Because this work leverages existing localization algorithms, we do not discuss
plaintext-related metrics like localization accuracy, ability to converge, or
sensitivity to noise as the privacy preserving implementations share these
properties with plaintext algorithms which have been studied
extensively~\cite{campos2019}.

\subsection{Security Analysis of \algo{}}
\label{sec:eval_security}
This security analysis considers an MPC-hybrid model meaning we assume the
underlying garbled circuit protocol is secure and prove security of \algo{}
according to the definition in \Cref{sec:secmodel}.
This implies the input feature locations, map, and pose remain
confidential over a single iteration, as it is simply an invocation of garbled
circuits which we assume is secure.
It may seem like assuming the MPC protocol is secure leaves nothing
left to prove, however, this is not the case.
Simply running each iteration of gradient decent under MPC independently
to localize \textit{one} image is not secure as the number of invocations is leaked.
The important question is regarding security in the streaming setting,
i.e.\ over multiple invocations of localization.

As an example, suppose a client has one image which requires 8
localization iterations to converge.
The MPC participants learn how many times they were called and in turn have
learned something about the quality of the initial pose estimate.
In general, more iterations are required if the initial pose estimate is far
away from the true pose; fewer are required if the initial pose estimate is
nearby.
Convergence speed also depends on tunable algorithm parameters and the presence
of degenerate solutions in the path of gradient descent.
It is unclear if such information could be used to directly reconstruct the
input image similar to attacks on line cloud obfuscation in previous
work~\cite{speciale2019, chelani2021privacy}.
What is clear, however, is this leakage has a negative impact on a higher level
notion of privacy over time.
For example, a camera moving through an area which is easier to localize
compared to the rest of its environment requires fewer iterations in one
specific area.
The MPC participants learn when the client is in the easily-localized area and
when it is not.
This leakage is captured by the definition in
\Cref{eq:security_def}, namely, the view of participants does not contain $j$
which is the size of $\mathbf{F}$ (the number of images and maps).
This leakage is critical to address and is primary focus of our security
analysis.

Recall the key insight of \algo{} is to break each iteration of gradient descent
apart.
To estimate the pose of a single image, \shortalgo{} must be invoked repeatedly,
until the pose converges.
If computing the pose of one image, this would still reveal the number of invocations
to the MPC participants, which is a security problem as previously discussed.
But if computing the pose on a stream of images, the MPC participants cannot
tell which pose refinement iterations belong to which images.
This hides how many iterations were required to converge for each image.
If the MPC participants also do not know how many images were used as input,
they cannot infer the number of iterations per image, which addresses the leakage.
Decoupling the number of iterations from the higher level localization executions
is the key to security.

There is one issue with the security argument as described.
We have not quantified how many images count as a ``stream'' or more
specifically, how many times \shortalgo{} must be invoked to hide the
association of gradient decent steps to input images.
For example, a stream containing one single image which requires three gradient
decent iterations to compute the pose doesn't offer very good security properties
because the MPC participants can see only three iterations were performed
and so the initial pose estimate must have been very close to the true pose.
It turns out, it depends on a public parameter of the LM and GN localization
algorithms, namely the upper bound on the number of gradient decent steps.
OpenCV's implementation of LM localization defines a maximum number of
iterations of 20 meaning if the pose hasn't converged after 20 iterations,
localization will halt.
It follows that if at least 20 iterations are performed, the probability of an
adversary outputting the correct number of iterations per image is no better
than guessing.

More generally, if \shortalgo{} is invoked $o$ times and the publicly known
upper bound of iterations is $c$, the number of images which could
have been used as input is between $\frac{o}{c}$ and $o$ (where $o > c$).
Thus, the probability of an adversarial offload server $A$ correctly outputting
the number of input images $i$ from the number of iterations it ran, $o$, is
given by:
\begin{equation}
	\text{Pr[A(o) = i]} < \frac{1}{o - \frac{o}{c}} +
	\frac{1}{p(\kappa)}
\end{equation}
where $p(\kappa)$ is a polynomial in
security parameter $\kappa$, an artifact which appears due to the negligible
advantage an attacker has when distinguishing garbled circuit wire labels.
Thus, we conclude that \algo{} is secure when invoked a minimum of $c$ times.
In practice, the default maximum from OpenCV is $c=20$ and since localization
is often called repeatedly (more than 20 times) on camera frames, this is not a
difficult requirement to meet.

Next, we prove security of \algo{} in the simulation-based definition from
\Cref{eq:security_def} using the notation from \Cref{sec:secmodel}.
The proof follows trivially from the previous discussion.
\begin{lemma}
	\shortalgo{} securely implements localization $\textbf{L}$ over a set of
	image and map feature sets $\mathbf{F}$ in the presence of static semi-honest
	adversaries in the MPC-hybrid model when invoked more than $c$ times.
\end{lemma}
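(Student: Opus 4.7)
The plan is to construct, for each party $k \in \{0, \ldots, n-1\}$, a probabilistic polynomial-time simulator $S_k$ satisfying \Cref{eq:security_def}. I would work in the MPC-hybrid model, treating each round of \shortalgo{} as a single call to the ideal garbled-circuit functionality. Since the underlying garbled-circuit protocol is assumed semi-honestly secure, it comes equipped with a per-party simulator $S^{\text{MPC}}_k$; the localization simulator $S_k$ invokes $S^{\text{MPC}}_k$ once per round and concatenates the resulting transcripts.

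First I would specify the simulator. On input $(1^\kappa, w_k, \text{output}_k)$ (for an offload server both $w_k$ and $\text{output}_k$ are empty, while for the input-holding client they contain the plaintext features and pose), $S_k$ proceeds round by round: for round $r$ it calls $S^{\text{MPC}}_k$ with the per-round functionality inputs and outputs it controls, appends the simulated transcript to its view, and repeats for a total of $o$ rounds, where $o$ is the public number of \shortalgo{} invocations observed in the hybrid execution.

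Second I would establish computational indistinguishability via a standard hybrid argument. Define $\mathcal{H}_0 = \text{view}_k^\pi(\mathbf{F})$ and let $\mathcal{H}_r$ replace the transcripts of the first $r$ rounds with those produced by $S^{\text{MPC}}_k$. Consecutive hybrids differ in exactly one garbled-circuit execution and are computationally indistinguishable by semi-honest security of the underlying MPC, so summing the negligible advantages over the polynomially many rounds yields a negligible overall gap, and $\mathcal{H}_o$ coincides with the output of $S_k$.

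The hard part is justifying that the observable round count $o$ itself does not leak information about $\mathbf{F}$, in particular about $j = |\mathbf{F}|$ or about the per-image iteration counts, since standard MPC security does not by itself hide the number of protocol invocations. This is precisely where the hypothesis $o \geq c$ is invoked: by the counting argument preceding the lemma, any PPT adversary recovers $j$ from $o$ with probability at most $\frac{1}{o - o/c} + \frac{1}{p(\kappa)}$, which is negligible in $\kappa$. I would formalize this by adding a final hybrid that replaces $o$ with a sample drawn from its marginal distribution depending only on the public parameters (notably the per-image upper bound $c$), and by applying the above bound to show it is statistically close to $\mathcal{H}_o$. Chaining this with the MPC-based hybrid sequence yields \Cref{eq:security_def} and completes the proof. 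The main conceptual obstacle is exactly this coupling between a combinatorial counting bound on iteration counts and the simulation paradigm, which is cleanly resolved only once the $o \geq c$ threshold is met.
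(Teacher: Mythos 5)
Your proposal takes essentially the same route as the paper: simulate each \shortalgo{} round by invoking the underlying garbled-circuit simulator in the MPC-hybrid model, concatenate the per-round transcripts, and use the $o \ge c$ threshold together with the preceding counting argument to argue that the round count itself reveals nothing beyond a publicly determined distribution --- the paper's (much terser) proof does exactly this, with the simulator choosing a random number of rounds that is at least $c$ rather than copying $o$ from the real execution. One caveat: the bound $\frac{1}{o - o/c} + \frac{1}{p(\kappa)}$ is inverse-polynomial, not negligible in $\kappa$, so your final hybrid is not ``statistically close'' in the strict sense; the paper's own proof glosses over this same point, so your write-up is, if anything, more explicit about where the argument is thinnest.
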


\begin{proof}
	Simulator $S$ constructs a view for the offload server by running
	one iteration of localization $l$ on randomly chosen inputs
	$\textbf{I}$, and $\textbf{M}$, appending messages of the party's
	respective role to the view.
	The messages for each iteration of $l$ may be simulated given we consider the
	MPC-hybrid model.
	The simulator does this a random number times which is at least $c$.
	The sets of messages corresponding each iteration are indistinguishable
	from one another and from those in the real protocol,
	thus the two ensembles are indistinguishable.
\end{proof}

There are two important takeaways.
First, the more often the client calls the localization function, the stronger
the privacy up to a maximum of $c$ iterations which may be fully simulated.
This is the opposite of prior work where privacy guarantees weaken with
subsequent invocations.
Second, the security guarantees of \shortalgo{} are in fact stronger than a
na\"ive data-oblivious adaptation.
The latter reveals exactly how many input images were used for localization.
\shortalgo{} reveals only a probability distribution of how many input images
were used as input $[o,\frac{o}{c}]$.

In summary, \algo{} does not reveal any information about the input features
(both image and map), or the pose when run more than $c$ times in series which
in practice is 20.
This is true for any input including the case when the same input image and map
is repeating (corresponding to a stationary camera) or for inputs
that change over time (a moving camera).
Previous attacks on privacy preserving localization rely on exploiting
knowledge of the feature locations to reconstruct the input image.
Such attacks are not possible on \algo{} as the image features, map features,
and all other partial information remains confidential.

\begin{figure}[t!]
	\centering
	\includegraphics[width=.9\columnwidth]{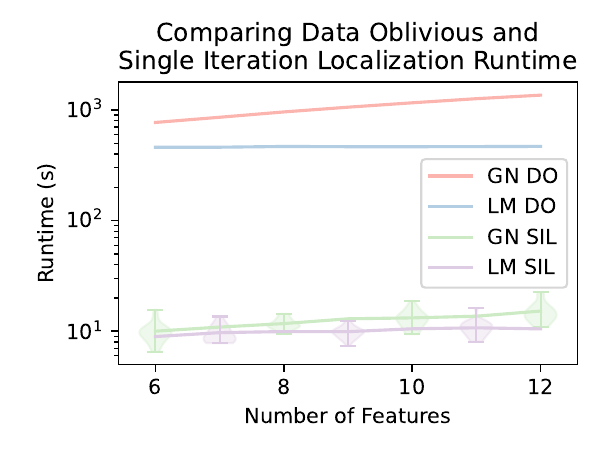}
	\caption{Time to compute localization using data oblivious
		and \algo{} with EMP.
		Data oblivious uses a fixed upper-bound of 20 optimization iterations (the
		default number of LM iterations in OpenCV) and 30 SVD iterations (the default
		in LAPACK) where as \algo{} is data dependent.
	}
	\label{fig:loopleak_vs_dataobl}
\end{figure}

\begin{figure*}[t!]
	\centering
	\begin{subfigure}{\columnwidth}
		\centering
		\includegraphics[width=.9\columnwidth]{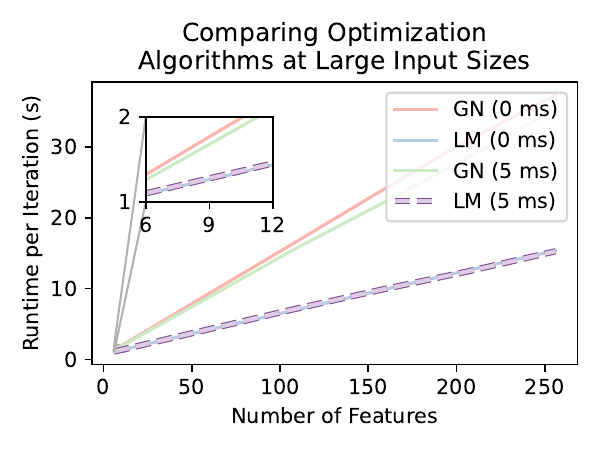}
		\caption{Wall clock runtime of \algo{} on large input size with 0~ms and
			5~ms of latency introduced between offload servers.}
		\label{fig:emp_full}
	\end{subfigure}
	\hfill%
	\begin{subfigure}{\columnwidth}
		\centering
		\includegraphics[width=.9\columnwidth]{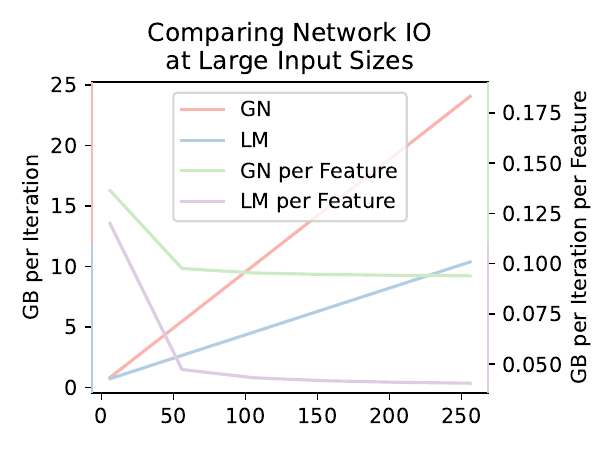}
		\caption{Average number of bytes transmitted between offload servers
			(from garbled circuit generator to evaluator) when computing \algo{}.
			Bytes from evaluator to generator are negligible.
		}
		\label{fig:emp_network_bytes}
	\end{subfigure}
	\caption{Comparison on large input size of Levenberg-Marquardt (LM) and Gauss
		Newton (GN) localization with EMP using \algo{}.
		The heavy cost of large input sizes highlights the importance of input
		preprocessing to reduce the number of input point correspondences.
		Inset axis show such small input sizes between 6 and 12 point correspondences.
		Values are normalized with respect to number of iterations to remove dependence
		on input data and algorithm tunable parameters.
	}
\end{figure*}

\subsection{Performance Evaluation}
Server side performance is evaluated using a desktop with an 11th generation
Intel® core-series CPU.
Each MPC participant is a process on the test machine communicating over
localhost where bandwidth is limited to 2.5 Gbps and latency is introduced
artificially where noted using traffic control.
The rationale for 2.5 Gbps being it is the maximum theoretical bandwidth of
WiFi 6E, the latest standard at the time of writing.
Localization is performed on features from the ETH3D
dataset~\cite{schops2017multi} where each measurement is an average of
randomly selected points where the same random sequence is used across
measurements.
Measurements for which convergence was not achieved are not included in the
results (in these cases the plaintext implementation also did not
converge).
Reported times are wall clock and
implementations are not multithreaded due to their high bandwidth requirements
as later shown.

The performance advantage of \algo{} over the data oblivious baseline are
between two and three orders of magnitude as shown
in~\Cref{fig:loopleak_vs_dataobl}.
It is natural to see why running fewer iterative steps under MPC reduces
runtime, but the magnitude of this difference has implications for the future
of applying general purpose secure computation to localization.
Localization with six points takes 11~seconds to compute rising modestly with
small increases in input size.
While 11~seconds is not within a practical envelope for low latency
localization applications like AR, it is practical for certain robotics
applications, as we later demonstrate.

Next, we consider network communication between {\it client} and offload
servers for the client-server computation offload setting.
EMP, being based on the half-gates garbled circuits
protocol~\cite{zahur2015two}, shares secret inputs via a correlated oblivious
transfer protocol.
Localization is a unique case where neither generator nor evaluator (the two
roles in the protocol the offload servers play) have the secret input data as
they are computing on behalf of a client.
The natural approach to garbled circuits is the generator sends all wire label
pairs to the client, who then forwards the appropriate label from the pair,
based on their secret bits, to the evaluator.
This requires the client receive $2\kappa$ bits per input bit and send $\kappa$
bits per input bit where $\kappa$ is a security parameter, the size of the wire
labels (128~bits by default in EMP's semi-honest protocol).
In the case of six input point correspondences and EMP's default security
parameters, this amounts to the client sending 123~Kb and receiving 246~Kb.
We improve upon this by noticing if the client knows the global circuit delta (a
part of the free XOR optimization~\cite{kolesnikov2008improved}), and circuit
seed, they may generate labels for the evaluator themselves.
From a security perspective this is not an issue, as the client is considered a
trusted third party and is allowed to learn all inputs and outputs of the
computation in the offload setting we consider.
The seed based approach reduces data the client receives to a constant $2\kappa$
bits (circuit delta and seed) while data sent remains $\kappa$ bits per input
bit (wire labels).
In the case of six input point correspondences this amounts to 123~Kb of total
communication instead of 369~Kb (123 + 246) via the natural approach.

\subsection{Limitations}
\label{sec:eval_limitations}
Thus far, performance has been evaluated on small input sizes, between six and
twelve input point correspondences.
While this range is representative of small input sizes for the localization
methods considered and is used by other work~\cite{schneider2018}, the quality
of localization can increase with larger input sizes notably when input data is
noisy.
Feature detection algorithms like SIFT~\cite{lowe2004} typically extract many
more features, around 1000 points per image, but the number which can be
matched to 3D map features is typically lower.
Thus, we consider the performance of input sizes up to 256 points which is
reasonable for popular feature detection algorithms.
Prior localization work considers between 100 and 300 points~\cite{qin2018}.
\Cref{fig:emp_full} focuses on the interaction between the two offload servers,
namely how latency affects runtime and the relationship between
input size and network communication.

Localization even on small input sizes takes around 10~seconds to compute, which
likely precludes time-sensitive applications like AR/VR because the displacement
distance, or error accumulated in the pose as estimated by less accurate means,
grows beyond what is acceptable before the more accurate pose may be computed.
This makes \algo{} better suited for less time sensitive applications like
robotics or offline processing, for example those described in~\Cref{sec:prelim_pploc}.

Comparing the scalability of GN to LM, both algorithms invert a matrix on
every optimization iteration; the key difference is the size of the matrix each
inverts.
GN computes the pseudo-inverse of a $2n \times 6$ Jacobian matrix $\mathbf{J}$
where $n$ is the number of input point correspondences and each row represents
the partial derivative with respect to one degree of freedom in the pose.
LM instead inverts a $6 \times 6$ matrix ($\mathbf{J}^\top \mathbf{J}$ plus an
offset along the diagonal from~\cite{fletcher1971modified}).
Because of the performance impact of inverting large matrices, LM is expected
to perform better than GN on large inputs.
This behavior is not explicitly clear for small numbers of points shown
in~\Cref{fig:loopleak_vs_dataobl}, but becomes important as the input size
scales as shown in~\Cref{fig:emp_full}.
Because LM also outperforms GN on small input sizes, LM is in general the
better approach.
We see each LM iteration of \shortalgo{} takes roughly 20~seconds to compute on
256 input points in~\Cref{fig:emp_full}.
Assuming 8 iterations to converge (a realistic value observed in the ETH3D
dataset), localization completes in 160~seconds.

Given MPC protocols are communication-intensive by nature, we measure the
bytes sent between the two MPC participants in~\Cref{fig:emp_network_bytes}.
Localization using 256 points using GN sends almost 200~GB while LM sends
64~GB, again assuming 8 iterations required for convergence.
In the client-server offload setting, this communication is between the two
offload servers and does not involve the device, however, in the multi-party
setting where input data is distributed between the parties, this high
communication cost is a limiting factor precluding large input sizes.

In \algo{}, we propose executing each iteration of the outer gradient descent
algorithm independently. We argue this approach is secure if invoked a minimum
number of times, a number which depends on a public parameter of the plaintext
algorithm, as discussed in~\Cref{sec:eval_security}. This is
amenable to repeated invocation however it could be seen as a limitation of
in the case where a single or few input images must be localized using
a large stream size. In this case, however, \shortalgo{} is no slower than
the na\"ive data-oblivious adaptation and still benefits from running fewer
SVD iterations driven by the analysis in~\Cref{sec:algo}.

The performance envelope and high network requirements of the proposed methods
are such that localization is best performed using a small number of high
quality feature correspondences.
As such, applying these methods likely requires input
pre-processing to eliminate outlier or even unnecessary inlier point
correspondences.

Before continuing, we briefly mention performance related to trusted execution
environments (TEEs) like Intel® SGX.
Memory requirements to perform localization on the number of points considered
is small enough to fit into secure enclaves, thus we expect performance in TEEs
to be near native (plaintext) speed, making \algo{} orders of magnitude slower
than the 10s of milliseconds required to compute in the clear.
The ``sluggish'' performance of MPC compared to TEE reflects the cost of
eliminating side channels, reducing requirements for specialized hardware, and
eliminating trust required of hardware vendors.

\begin{figure}[t!]
	\centering
	\fbox{\includegraphics[width=\columnwidth]{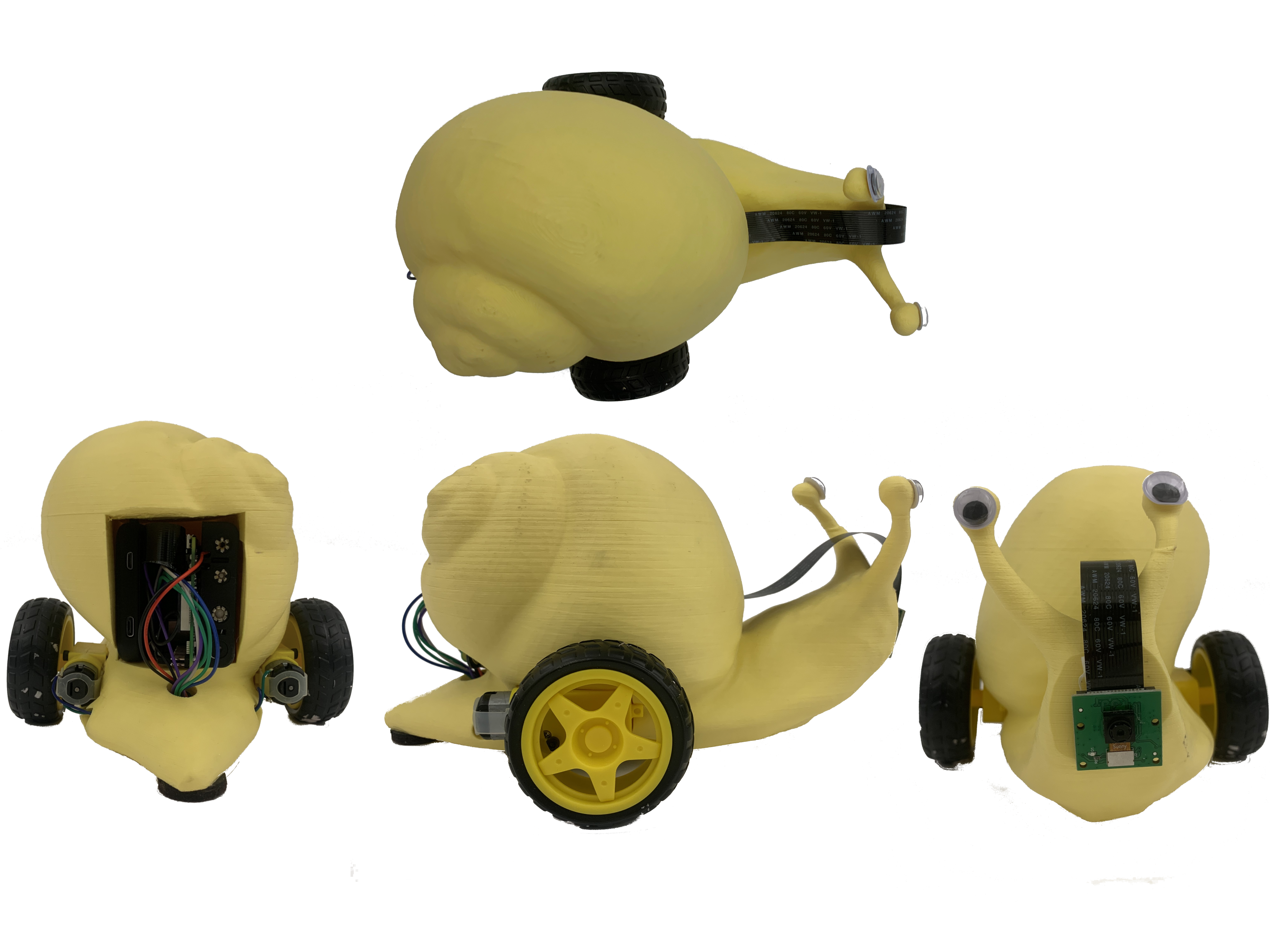}}
	\caption{Robotic snail proof of concept.
		Model adapted from~\cite{snaily}.
	}
	\label{fig:snail}
\end{figure}

\subsection{Proof of Concept: Turbo the Snail}
To demonstrate the feasibility of privacy preserving localization, we built a
Raspberry Pi-based robot acting as a lightweight client which offloads
localization.
The robot is equipped with one RGB camera and uses its onboard WiFi module to
communicate with the offload servers, a maximum bandwidth of 100~Mbps.
The robot is programmed to move to a target location defined within its
environment, leveraging the AprilTag library~\cite{olson2011} to detect marker
images printed on paper in its environment.
The marker corners are detected in the camera image and then matched to a
ground truth 3D map of the marker's position, yielding the 2D and 3D features
used in localization.
Once the pose is known, the robot moves to a predefined target position using
position-based visual servoing~\cite{chaumette2006} via the ViSP
library~\cite{marchand2005}.

In this configuration, the servers learn neither the input feature locations
(2D nor 3D), nor the resultant pose.
By using few high quality input point correspondences, the number of features
is kept low.
The use of specific markers or patterns is commonly used in visual
servoing~\cite{chaumette2006}.
Since privacy preserving localization takes roughly ten seconds to complete on
eight input points as seen in~\Cref{fig:loopleak_vs_dataobl}, we allow the robot
to move 0.1~meters before re-localizing and thus, moves at a rate of
0.01~m/s, hence its form~--~a snail.
We find in practice the snail moves more quickly because the pose estimate from
one localization is used as the initial estimate for the next, reducing the
number of optimization iterations required.
After the initial pose is solved, convergence is frequently achieved in only
one call to \shortalgo{}.

In the context of localization, power consumption is an important metric as
devices are often mobile and battery powered.
To quantify this, the servo motors of the robot are disabled and pose
estimation is performed every 20ms.
In the case of \algo{}, the offload servers return dummy data such that the rate of
localization may be compared to that of plaintext.
Energy consumed by the Raspberry Pi model 3B+ is measured with a TP-Link K115
energy monitor.
Offloading localization reduces power consumption from 3.7 Watts to 2.8 Watts
(24\%) with idle power being 1.7 Watts.
We attribute the increase between idle and \shortalgo{} mostly to polling the
socket while waiting for data to be received from the offload servers,
presenting an opportunity for future optimization.

\vspace{-2ex}
\section{Conclusion}
\label{sec:conclusion}
In this work we introduce a simulation-based definition of security for privacy
preserving localization, and \algo{} (\shortalgo{}) a secure approach to visual
localization.
While \shortalgo{} requires additional rounds of communication, it reduces the
computation and communication required per round by running each localization
iteration independently.
We then demonstrate the practicality of \shortalgo{} with \snailshort{}, the
privacy preserving snail.

\begin{acks}
Thank you anonymous shepherd and reviewers. This work has
been supported by NSF awards CCF-2217070 and CNS-1909769.
\end{acks}

\bibliographystyle{ACM-Reference-Format}

\bibliography{paper}
\end{document}